\title{Calibration for Stratified Classification Models}
\author{
Chandler Zuo \\
Data Science, Audible Inc.\\
33 Washington Street, Newark, NJ 07102 \\
\texttt{chandlez@audible.com} \\
}
\newtheorem{theorem}{Theorem}
\newtheorem{lemma}[theorem]{Lemma}
\begin{document}

\maketitle

\begin{abstract}
In classification problems, sampling bias between training data and testing data is critical to the ranking performance of classification scores. Such bias can be both unintentionally introduced by data collection and intentionally introduced by the algorithm, such as under-sampling or weighting techniques applied to imbalanced data. When such sampling bias exists, using the raw classification score to rank observations in the testing data can lead to suboptimal results. In this paper, I investigate the optimal calibration strategy in general settings, and develop a practical solution for one specific sampling bias case, where the sampling bias is introduced by stratified sampling. The optimal solution is developed by analytically solving the problem of optimizing the ROC curve. For practical data, I propose a ranking algorithm for general classification models with stratified data. Numerical experiments demonstrate that the proposed algorithm effectively addresses the stratified sampling bias issue. Interestingly, the proposed method shows its potential applicability in two other machine learning areas: unsupervised learning and model ensembling, which can be future research topics.
\end{abstract}

\section{Introduction}\label{sec:introduction}

Calibrating classifier outputs to get good probability estimates has received much attention in the machine learning literature (c.f. \cite{niculescu2005obtaining}, \cite{niculescu2005predicting} and \cite{zadrozny2002transforming}). These papers have proposed and/or evaluated methodologies that mitigate poor calibration of maximum margin classification algorithms, such as Naive Bayes, Gradient Boosting and SVM. The proposed methods assume that the model-based probability scores already have good rank concordance, based on some loss function, with the "true" probability, while further adjustment is applied to reduce the bias in probability estimation. One problem that has been neglected, though, is that the rank concordance with the "true" probability can be poor in the first place, and reranking may be required. This problem can arise from the sampling bias in the training data or weighting schema, and can lead to serious problems when applying the model to holdout data sets.

To illustrate this problem, consider the following setting. Suppose covariates $X\in \mathbb R^d$ follows two different distributions in the training data and testing data, respectively denoted by $F(x)$ and $\tilde F(x)$. Also assume that $p_{Y|X}$ is the "true" probability model that generates the binary response variable $Y\in\{0,1\}$. With real data, we almost always can only fit a mis-specified model, as fitting the "true" model requires both perfect variable selection and specifiying the right form of the function (\cite{begg1990consequences}, \cite{lv2014model}). A mis-specified model, parameterized by $\theta\in \Theta$, is fitted by minimizing the average loss function over the training sample, the expectation of which is:

\begin{equation}
E_{F}l(f_\theta(X),Y)=\int \sum_{y=0}^1p_{Y|X}(y|x)w(x,y)l(f_\theta(x),y)d F(x),
\end{equation}

 where $w(x,y)$ is a weight function for each point in the sample space. Because the goal is to make predictions on the testing data, a good model should achieve low values in the average loss over the testing sample, which can be written as:

\begin{equation}
E_{\tilde F} l(f_\theta(X),Y)=\int \sum_{y=0}^1p_{Y|X}(y|x)l(f_\theta(x),y)d\tilde F(x).
\end{equation}

Given that the sampling distribution can be different from the testing, and the model is mis-specified, a sufficient condition for $E_{F}l(f_\theta(X),Y)=E_{\tilde F} l(f_\theta(X),Y)$ is $w(x,y)=d\tilde F(x)/dF(x)$. The latter ratio is called the \textit{sampling bias factor} in this paper. This condition may not be always feasible. A very common scenario is when the classes are unbalanced. In this case, $w(x,y)$ can be purposely designed to be different from the sampling bias factor to compensate for data unbalanceness in order to boost overall classification accuracy. A summary of the various strategies that either implicitly or explicitly adopt a weighting schema different from the sampling bias factor is included in \cite{ramyachitra2014imbalanced}. Given that these strategies have been demonstrated to well improve classification for unbalanced data, it is preferable to develop methods that can adjust for the sampling distribution effect post model fitting.

Optimal calibration for arbitrary sampling bias is impossible in practice, as I will show in Section \ref{sec:perfectcalib}. In this paper, I investigate calibration in a special case, where training and testing samples are stratified. In this case, optimal calibration exists if ranking within strata is kept unchanged and only ranking between strata is caliberated. Notice that building models based on stratified sampling is quite common practice for real life data. For example, in customer scoring models or credit risk models, modeling for the full data set can be challenged by data volume, and a common practice is to sample data stratified on pre-defined customer segments for initial model training. Calibrating ranking across the sampling strata is necessary according to the finding in this paper, but to my best knowledge, because such problem has not been emphasized in literature, ranking calibration is often neglected by practitioners, which can leads to sub-optimal or even inaccurate results.

In this paper, I propose a calibration method that ad-hocly re-ranks the classification scores to address the training data sampling bias. The method is developed by converting the calibration problem to an optimization problem for the ROC curve conditioning on the testing data distribution. Mathematical solution to this problem shows that the optimal ranking should be based on quantities that are closely connected to the slope along the ROC curve. Earlier research (\cite{choi1998slopes} and \cite{johnson2004advantages}) has established the relationship between slopes along the ROC curve with likelihood ratios, which lends to theoretical interpretation of my newly proposed methodology. 

This paper is organized as the following. In Section \ref{sec:rerank} I derive mathematically the optimal re-ranking solution to address sampling bias. The solution requires assumptions that both sampling bias function and conditional class probabilities are known. In Section \ref{sec:application}, I discuss how the mathematical results can be applied in three different real life settings. Section \ref{sec:examples} includes numerical studies to demonstrate the performance of the proposed methods, and Section \ref{sec:conclusions} discusses the potential extentiation of this research onto future areas.

\section{Ranking Calibration to Optimize ROC Curve}\label{sec:rerank}

The goal of my proposed re-ranking method is to optimize the ROC curve. ROC is a state-of-the-art metric to measure binary classification model performances \cite{fawcett2006introduction}, and two of its properties lend itself to the purpose of our calibration goal: (1) its invariance to the balance of class labels; (2) it is only based on the ranking but not the scale of classification scores. The second property is especially relevant to this paper since my main goal is to calibrate the ranking due to sampling bias.

Consider a classification model with scoring function $\hat f: \mathcal X \rightarrow \mathbb R$. The ROC curve is generated by computing the sequence of True Positive Rates (TPR) and False Positive Rates (FPR) at varying score thresholds $\lambda$. I use $\alpha(\lambda)$ and $\beta(\lambda)$ to denote the $FPR$ and $TPR$ respectively corresponding to the threshold $\lambda$ on the testing data. Their values can be written as:

\[
\alpha(\lambda)=E_{\tilde F}(\hat f(X) > \lambda | Y = 0)=\int 1\{\hat f(x) > \lambda\}\frac{p_{Y|X}(0|x)}{p_Y(0)}d\tilde F(x),
\]

\[
\beta(\lambda)=E_{\tilde F}(\hat f(X) > \lambda | Y = 1)=\int 1\{\hat f(x) > \lambda\}\frac{p_{Y|X}(1|x)}{p_Y(1)}d\tilde F(x).
\]

In order for re-ranking, I generalize the staticly valued $\lambda$ as a function of the covariates $\lambda(x)$, which sets the score threshold at different points in the sample space: 

\[
\alpha(\lambda)=E_{\tilde F}(\hat f(X) > \lambda(X) | Y = 0)=\int 1\{\hat f(x) > \lambda(x)\}\frac{p_{Y|X}(0|x)}{p_Y(0)}d\tilde F(x),
\]

\[
\beta(\lambda)=E_{\tilde F}(\hat f(X) > \lambda(X) | Y = 1)=\int 1\{\hat f(x) > \lambda(x)\}\frac{p_{Y|X}(1|x)}{p_Y(1)}d\tilde F(x).
\]

Optimizing the ROC curve requires maximizing TPR at each FPR level. Notice that a global solution that maximizes TPR at all FPR levels may not exist in general ROC optimization problems (\cite{fawcett2006introduction}), but it exists under certain conditions. In the rest of this section, I consider ROC optimization in two situations: unconditional and conditional on stratified sampling. 

\subsection{Perfect Calibration}\label{sec:perfectcalib}

In the unconditional case, I consider $\lambda$ as an unconstrained function.  The problem to maximize TPR at one FPR level $\alpha_0$ is:

\begin{equation}\label{eq:optim}
\begin{split}
\max & \int 1\{\hat f(x) \geq \lambda(x)\} \frac{p_{Y|X}(1|x)}{p_Y(1)}d\tilde F(x),\\
s.t. & \int 1\{\hat f(x) \geq \lambda(x)\} \frac{p_{Y|X}(0|x)}{p_Y(0)}d\tilde F(x) \leq \alpha_0.
\end{split}
\end{equation}

\begin{lemma}\label{lm:rank}
Consider $R(c)=\{x:p(1|x)p_Y(0)\geq cp(0|x)p_Y(1)\}$. Assume that 
\begin{enumerate}
\item $p_{Y|X}(1|x)\in (0,1),~a.s. ~\tilde F$;
\item $\exists c_0~s.t.~\int_{R(c_0)}\frac{p(0|x)}{p_Y(0)}d\tilde F(x)=\alpha_0$.
\end{enumerate}
Then, the solution to the optimization problem \ref{eq:optim} is $\lambda=\lambda_0$ satisfying $1\{\hat f(x)\geq \lambda_0(x)\}=1(R(c_0)),~a.s.$.
\end{lemma}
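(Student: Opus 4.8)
The plan is to read Problem~\eqref{eq:optim} as a Neyman--Pearson problem and then run the classical optimality-and-uniqueness argument. Abbreviate $g_1(x)=p(1|x)/p_Y(1)$ and $g_0(x)=p(0|x)/p_Y(0)$, so the objective and constraint become $\int_S g_1\,d\tilde F$ and $\int_S g_0\,d\tilde F\le\alpha_0$, where $S=\{x:\hat f(x)\ge\lambda(x)\}$. The first step is to observe that allowing $\lambda$ to be an arbitrary function $\mathcal X\to\mathbb R$ is the same as allowing $S$ to be an arbitrary measurable subset of $\mathcal X$: for any measurable $S$, taking $\lambda(x)=\hat f(x)-1$ on $S$ and $\lambda(x)=\hat f(x)+1$ off $S$ reproduces it. Thus \eqref{eq:optim} is exactly the problem of maximizing the $\tilde F$-mass of $g_1$ over $S$ subject to a cap $\alpha_0$ on the $\tilde F$-mass of $g_0$ over $S$, and $R(c)=\{x:g_1(x)\ge c\,g_0(x)\}$ is precisely a likelihood-ratio superlevel set.

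Next I would dispose of degeneracies. By the second hypothesis, $S^\star:=R(c_0)$ is feasible with the constraint tight, $\int_{S^\star}g_0\,d\tilde F=\alpha_0$. If $\alpha_0=1$ the claim is trivial, so assume $\alpha_0<1$; then $c_0>0$, since $c_0\le 0$ would give $c_0 g_0\le 0\le g_1$ a.s.\ and hence $R(c_0)=\mathcal X$ a.s.\ with $\int_{R(c_0)}g_0\,d\tilde F=1$. I would also record that the hypothesis $p(1|x)\in(0,1)$ a.s.\ makes $g_0,g_1>0$ a.s., which is what keeps $R(c_0)$ and the mass functionals well behaved.

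The heart of the argument is the following comparison: for any feasible $S$, decompose $\int_{S^\star}g_1-\int_S g_1=\int_{S^\star\setminus S}g_1-\int_{S\setminus S^\star}g_1$. On $S^\star\setminus S\subseteq R(c_0)$ one has $g_1\ge c_0 g_0$, while on $S\setminus S^\star\subseteq R(c_0)^c$ one has $g_1<c_0 g_0$, so
\[
\int_{S^\star}g_1\,d\tilde F-\int_S g_1\,d\tilde F\ \ge\ c_0\Bigl(\int_{S^\star\setminus S}g_0\,d\tilde F-\int_{S\setminus S^\star}g_0\,d\tilde F\Bigr)=c_0\Bigl(\alpha_0-\int_S g_0\,d\tilde F\Bigr)\ \ge\ 0 ,
\]
using $c_0>0$ and feasibility of $S$; this shows $S^\star=R(c_0)$ is optimal. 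To justify the ``the solution is'' phrasing I would then chase the equality case: equality forces $\int_{S\setminus S^\star}(c_0 g_0-g_1)\,d\tilde F=0$ with strictly positive integrand, hence $\tilde F(S\setminus S^\star)=0$; combined with the implied tightness $\int_S g_0\,d\tilde F=\alpha_0=\int_{S^\star}g_0\,d\tilde F$ and $g_0>0$ a.s.\ (which is where $p(1|x)<1$ a.s.\ enters), this forces $\tilde F(S^\star\setminus S)=0$ as well, so $S=R(c_0)$ up to $\tilde F$-null sets, i.e.\ any optimal $\lambda_0$ satisfies $1\{\hat f(x)\ge\lambda_0(x)\}=1(R(c_0))$ a.s.

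Since this is in essence the Neyman--Pearson lemma, I do not anticipate a genuine obstacle; the points needing care are the reduction from threshold functions $\lambda(\cdot)$ to arbitrary measurable sets, attaching each hypothesis to exactly the step that needs it (the second for feasibility and tightness of $R(c_0)$, the first for positivity of $g_0$ in the uniqueness step), and checking that the functionals $\int g_j\,d\tilde F$ are finite --- indeed equal to $1$ when $p_Y(j)=\int p(j|x)\,d\tilde F(x)$ --- so that every displayed integral is meaningful.
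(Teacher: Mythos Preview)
Your proposal is correct and follows essentially the same route as the paper: both arguments take an arbitrary feasible set $S$, split $\int_{R(c_0)}g_1-\int_S g_1$ over $R(c_0)\setminus S$ and $S\setminus R(c_0)$, apply the defining inequality of $R(c_0)$ on each piece, and finish with feasibility of $S$ together with tightness of the constraint at $R(c_0)$. Your version is somewhat more careful than the paper's---you make explicit the reduction from threshold functions $\lambda(\cdot)$ to arbitrary measurable sets, you check $c_0>0$, and you supply the equality-case (uniqueness) analysis that the paper's proof omits despite the ``the solution'' wording of the statement---but the underlying idea is identical.
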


\begin{proof}
Consider an arbitrary feasible solution for \ref{eq:optim}, $\lambda$. Let $R=\{x:\hat f(x)\geq \lambda(x)\}$. When $x\in R\setminus R(c_0)$, $p_{Y|X}(1|x)/p_Y(1)<c_0p_{Y|X}(0|x)/p_Y(0)$. When $x\in R(c_0)\setminus R$, $p_{Y|X}(1|x)/p_Y(1)\geq c_0p_{Y|X}(0|x)/p_Y(0)$. Therefore,

\begin{equation*}
\begin{split}
&\int_{R(c_0)}\frac{p_{Y|X}(1|x)}{p_Y(1)}d\tilde F(x)-\int_{R}\frac{p_{Y|X}(1|x)}{p_Y(1)}d\tilde F(x)\\
&=\int_{R(c_0)\setminus R}\frac{p_{Y|X}(1|x)}{p_Y(1)}d\tilde F(x)-\int_{R\setminus R(c_0)}\frac{p_{Y|X}(1|x)}{p_Y(1)}d\tilde F(x)\\
&\geq c_0[\int_{R(c_0)\setminus R}\frac{p_{Y|X}(0|x)}{p_Y(0)}d\tilde F(x)-\int_{R\setminus R(c_0)}\frac{p_{Y|X}(0|x)}{p_Y(0)}d\tilde F(x)]\\
&= c_0[\int_{R(c_0)}\frac{p_{Y|X}(0|x)}{p_Y(0)}d\tilde F(x)-\int_{R}\frac{p_{Y|X}(0|x)}{p_Y(0)}d\tilde F(x)]\\
&=c_0[\alpha_0-\int1\{\hat f(x) \geq \lambda(x)\}\frac{p_{Y|X}(0|x)}{p_Y(0)}d\tilde F(x)]\geq 0.
\end{split}
\end{equation*}

Notice that the last inequality is because $\lambda$ is a feasible solution for \ref{eq:optim}. This finishes the proof.
\end{proof}

Although Lemma \ref{lm:rank} does not provide an explicit solution for function $\lambda$, it shows that the optimal threshold is equivalent to ranking by 

\[
OR(x):=p(1|x)p_Y(0)/[p(0|x)p_Y(1)],
\]

which is the odds ratio between the conditional distribution given covariates and the marginal distribution of $Y$. With additional steps, I demonstrate two additional important properties of ranking by odds ratio: (1) such ranking uniformly optimizes the ROC curve at all FPR levels; (2) the optimality of such ranking is invariant to different sampling distributions.

\begin{theorem}\label{th:rank}
Consider $R(c)=\{x:OR(x)\geq c\}$. For any distribution $\tilde F'$ on $\mathbb R^p$ which satisfies: 
\begin{enumerate}
\item $p_{Y|X}(1|x)\in (0,1),~a.s. ~\tilde F'$;
\item $\forall \alpha_0\in (0,1),~\exists c_0~s.t.~\int_{R(c_0)}\frac{p(0|x)}{p_Y(0)}d\tilde F'(x)=\alpha_0$.
\end{enumerate}
Then, ranking data points according to $OR(x)$ uniformly maximizes TPR at all FPR levels corresponding to $\tilde F'$.
\end{theorem}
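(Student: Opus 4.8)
The plan is to obtain Theorem \ref{th:rank} as a ``level-by-level'' consequence of Lemma \ref{lm:rank}, the only genuinely new observation being that scoring the covariates by $OR(x)$ has the sets $R(c)$ as its super-level sets, so the ROC curve induced by this ranking is traced out exactly by the regions $R(c)$.

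First I would make the link between ranking and regions explicit. Ranking data points by $OR(x)$ and thresholding at $c$ declares $R(c)=\{x:OR(x)\geq c\}$ positive, so on $\tilde F'$ this ranking produces the ROC point $(\alpha'(c),\beta'(c))$ with $\alpha'(c)=\int_{R(c)}\frac{p(0|x)}{p_Y(0)}d\tilde F'(x)$ and $\beta'(c)=\int_{R(c)}\frac{p(1|x)}{p_Y(1)}d\tilde F'(x)$; as $c$ varies this is precisely the ROC curve of the $OR$-ranking. I would note that, because $p_{Y|X}(1|x)\in(0,1)$ a.s.\ $\tilde F'$ by hypothesis~1, $p(0|x)>0$ a.s., so this $R(c)$ agrees a.s.\ with the set $R(c)$ of Lemma \ref{lm:rank}; and that $OR$ and the sets $R(c)$ do not depend on $\tilde F'$ at all — only the correspondence $c\mapsto\alpha'(c)$ does.

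Next I would fix an arbitrary $\alpha_0\in(0,1)$, use hypothesis~2 to pick $c_0$ with $\alpha'(c_0)=\alpha_0$, and apply Lemma \ref{lm:rank} with $\tilde F'$ in place of $\tilde F$: the two hypotheses of the lemma are exactly hypotheses~1 and~2 of the theorem read at this single $\alpha_0$, and the lemma's proof never uses anything about $\tilde F$ beyond its being a probability measure, so it transfers verbatim. The lemma then says $R(c_0)$ solves problem \ref{eq:optim} at FPR budget $\alpha_0$, i.e.\ $\beta'(c_0)$ is the maximal TPR achievable by any threshold function $\lambda(\cdot)$ — in particular by any competing scoring rule thresholded to have FPR $\le\alpha_0$, since such a region is feasible for \ref{eq:optim}. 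Hence the $OR$-ROC curve passes through the best attainable TPR at FPR $\alpha_0$, and letting $\alpha_0$ range over $(0,1)$ gives uniform optimality; the endpoints $\alpha_0\in\{0,1\}$ are trivial ($\beta=0$, resp.\ $\beta=1$, matched by $R(c)$ as $c\to\infty$, resp.\ $c\to0^+$). Finally, since the same $OR$ works for every admissible $\tilde F'$, the optimality is invariant across sampling distributions.

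I do not anticipate a real obstacle: the substance is entirely in Lemma \ref{lm:rank}. The steps needing care are purely organizational — confirming the two descriptions of $R(c)$ coincide a.s.\ (hypothesis~1), spelling out that an arbitrary scoring rule thresholded at FPR $\alpha_0$ gives a region feasible for \ref{eq:optim} so the lemma's optimality statement bounds its TPR, and disposing of the degenerate FPR levels $0$ and $1$.
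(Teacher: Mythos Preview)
Your proposal is correct and follows essentially the same route as the paper: fix an FPR level, invoke Lemma~\ref{lm:rank} (with $\tilde F'$ in place of $\tilde F$) to get optimality of $R(c_0)$, and then observe that the super-level sets $R(c)$ are nested so a single ranking by $OR(x)$ realizes all these optima simultaneously. Your write-up is in fact more careful than the paper's brief sketch about the organizational points (matching the two descriptions of $R(c)$ a.s., feasibility of competing scores, endpoints, and invariance in $\tilde F'$), but the substance is identical.
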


\begin{proof}
Given an FPR level $\alpha_0$, the assumption shows that $\exists c_0~s.t.~\int_{R(c_0)}\frac{p(0|x)}{p_Y(0)}d\tilde F'(x)=\alpha_0$. Following similar arguments in the proof of Lemma \ref{lm:rank}, it can be seen that TPR is maximized at FPR level $\alpha_0$ if data points in $R(c_0)$ are ranked before $R_(c_0)^C$. Increase $\alpha_0$ from 0 to 1, the corresponding $c_0$ decreases, and the set sequence $\{R(c_0)\}$ is nested since $R(c)\subset R(c')$ when $c>c'$. This shows that ranking by $OR(x)$ maximizes TPR at all FPR levels.
\end{proof}

The implication of Theorem \ref{th:rank} is two fold. First, it shows that optimizing ROC curve has the same goal as recovering the "true" model $p_{Y|X}$. This justifies the usage of ROC as the optimziation criteria, because better ROC curve shows better approximation to the "true" model. To my best knowledge, this relationship between ROC curve and "true" model is new finding in the literature. Second, such optimal ranking is impossible in practice, as either models are mis-specified or the parameters are not estimated exactly. This suggests me to explore ROC optimization under special conditions.

\subsection{Calibration for Bias Conditional on Sampling Strata}

In this section, I consider that both training data and testing data are stratified sampled based on $(G,Y)$ where $G=g(X)$. The marginal distributions for $G$ are $F_G$ and $\tilde F_G$ for training data and testing data respectively. Conditional on $G$, the probability for $Y$ is $p_{Y|G}$ and $\tilde p_{Y|G}$ respectively on the training and testing data. Conditional on $G,Y$, $(X|G,Y)$ has the same distribution between training and testing data. I further consider $\lambda$ as a function of $G$, so that it only adjusts for the sampling on the strata. The problem to maximize TPR at one FPR level $\alpha_0$ becomes:

\begin{equation}\label{eq:optim1}
\begin{split}
\max & \frac{\int P[\hat f(X) \geq \lambda(g)|Y=1,G=g]\tilde p_{Y|G}(1|g)d\tilde F_G(g)}{\int \tilde p_{Y|G}(1|g)d\tilde F_G(g)},\\
s.t.& ~\frac{\int P[\hat f(X) \geq \lambda(g)|Y=0,G=g] \tilde p_{Y|G}(0|g) d\tilde F_G(g)}{\int \tilde p_{Y|G}(0|g)d\tilde F_G(g)}\leq \alpha_0.
\end{split}
\end{equation}

\subsubsection{Optimal Stratified Calibration}

\begin{lemma}\label{lm:condrank}
Denote $\alpha(\eta, g)=P(\hat f(X)\geq \eta|Y=0,G=g)$, $\beta(\eta, g)=P(\hat f(X)\geq \eta|Y=1,G=g)$. Let
\[
h_g(\eta)=\frac{\partial\beta(\eta,g)/\partial\eta}{\partial\alpha(\eta,g)/\partial\eta}.
\]
Assume that
\begin{enumerate}
\item $p_{Y|X}(1|x)\in (0,1),~a.s. ~\tilde F_G$;
\item $\forall g~a.s.~\tilde F_g,~h_g$ is continous in $\eta$, $\lim_{\eta\rightarrow -\infty}h_g(\eta)=\infty$, $\lim_{\eta\rightarrow \infty}h_g(\eta)=0$.
\end{enumerate}
Then, 
\begin{enumerate}
\item $\forall g~a.s.~\tilde F_g$, $\forall c\in \mathbb R$, $\exists \lambda(c,g)$ s.t. $\tilde p_{Y|G}(1|g)h_g(\lambda(c,g))/\tilde p_{Y|G}(0|g)=c$.
\item The equation in $c$: $\int \tilde p_{Y|G}(0|g)\alpha(\lambda(c,g),g)d\tilde F_G(g)=\alpha_0\int \tilde p_{Y|G}(0|g)d\tilde F_G(g)$ has solution $c=c(\alpha_0)$.
\item $\lambda(c(\alpha_0),g)$ as a function of $g$ is the solution to the optimization problem \ref{eq:optim1}.
\end{enumerate}
\end{lemma}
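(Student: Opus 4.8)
The plan is to treat the three conclusions in order. Conclusions (1) and (2) are existence statements that reduce to the intermediate value theorem, whereas conclusion (3) is an optimality statement that I would prove by a stratum-by-stratum Lagrangian (Neyman--Pearson-type) exchange argument mirroring the proof of Lemma~\ref{lm:rank}. For (1), fix $g$ outside a $\tilde F_G$-null set. Integrating Assumption~(1) over $(X\mid G=g)$ gives $\tilde p_{Y|G}(1|g)\in(0,1)$, so $\tilde p_{Y|G}(1|g)/\tilde p_{Y|G}(0|g)$ is a finite positive constant; by Assumption~(2) the map $\eta\mapsto \tilde p_{Y|G}(1|g)h_g(\eta)/\tilde p_{Y|G}(0|g)$ is then continuous and sweeps out all of $(0,\infty)$, so the intermediate value theorem produces $\lambda(c,g)$ for every $c$ in the relevant (positive) range. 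If $h_g$ is not monotone I would fix a measurable selection, e.g.\ the supremum of the root set, so that $g\mapsto\lambda(c,g)$ is well defined.

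For (2), write $\nu_g=\tilde p_{Y|G}(0|g)$, $N=\int \nu_g\,d\tilde F_G(g)$, and $A(c)=\int \nu_g\,\alpha(\lambda(c,g),g)\,d\tilde F_G(g)$. The defining identity $\tilde p_{Y|G}(1|g)h_g(\lambda(c,g))/\tilde p_{Y|G}(0|g)=c$ forces $h_g(\lambda(c,g))\to 0$ or $\infty$ as $c$ runs to its two extremes, hence (by the limit behaviour of $h_g$) $\lambda(c,g)$ runs to an end of the line where $\alpha(\cdot,g)$ equals $0$ or $1$; since $0\le\alpha\le1$, dominated convergence gives $A(c)\to 0$ at one end and $A(c)\to N$ at the other, and continuity of $c\mapsto A(c)$ follows from continuity of $c\mapsto\lambda(c,g)$ and of $\alpha(\cdot,g)$ together with dominated convergence. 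As $\alpha_0 N\in(0,N)$, another application of the intermediate value theorem yields $c(\alpha_0)$.

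For (3), put $\mu_g=\tilde p_{Y|G}(1|g)$, $c^*=c(\alpha_0)$, and $\lambda^*(g)=\lambda(c^*,g)$; by the choice of $c(\alpha_0)$ this $\lambda^*$ is feasible for \ref{eq:optim1} with the FPR constraint holding with equality. The crux is to show that for $\tilde F_G$-a.e.\ $g$ the threshold $\lambda^*(g)$ maximizes the per-stratum Lagrangian
\[
\phi_g(\eta):=\mu_g\,\beta(\eta,g)-c^*\nu_g\,\alpha(\eta,g)
\]
over $\eta\in\mathbb R$. Using that $\alpha(\cdot,g),\beta(\cdot,g)$ are absolutely continuous with $\beta'=h_g\alpha'$ and $\alpha'\le0$, one gets $\phi_g'(\eta)=\alpha'(\eta,g)\big(\mu_g h_g(\eta)-c^*\nu_g\big)$, and the defining identity $\mu_g h_g(\lambda^*(g))=c^*\nu_g$ together with the crossing behaviour of $h_g$ posited in Assumption~(2) makes $\phi_g$ increase up to $\lambda^*(g)$ and decrease afterwards, so $\lambda^*(g)$ is the global maximizer. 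This step --- which amounts to concavity of the within-stratum ROC curve --- is where I expect the real work to lie; the remainder is bookkeeping.

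Granting the per-stratum maximality, for any feasible $\lambda$ and $\tilde F_G$-a.e.\ $g$ we have $\phi_g(\lambda^*(g))\ge\phi_g(\lambda(g))$; integrating against $\tilde F_G$ and rearranging gives
\[
\int\mu_g\beta(\lambda^*(g),g)\,d\tilde F_G(g)-\int\mu_g\beta(\lambda(g),g)\,d\tilde F_G(g)\ \ge\ c^*\Big(\int\nu_g\alpha(\lambda^*(g),g)\,d\tilde F_G(g)-\int\nu_g\alpha(\lambda(g),g)\,d\tilde F_G(g)\Big).
\]
The first $\alpha$-integral equals $\alpha_0 N$ by construction and the second is $\le\alpha_0 N$ by feasibility of $\lambda$, so since $c^*>0$ the right-hand side is nonnegative; hence $\int\mu_g\beta(\lambda^*(g),g)\,d\tilde F_G(g)\ge\int\mu_g\beta(\lambda(g),g)\,d\tilde F_G(g)$, and dividing by $\int\mu_g\,d\tilde F_G(g)$ shows $\lambda^*$ attains at least the TPR of any feasible $\lambda$, i.e.\ it solves \ref{eq:optim1}. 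Besides the concavity point flagged above, the only remaining obstacles are technical: measurability of $g\mapsto\lambda(c,g)$ and $g\mapsto\lambda^*(g)$, and the dominated-convergence justifications, all routine since every integrand is bounded by $\mu_g$ or $\nu_g$; I would also note explicitly that the optimum exhausts the FPR budget, which is why the exchange inequality is anchored at $\alpha_0 N$.
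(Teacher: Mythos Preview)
Your proposal is correct and follows essentially the same route as the paper: the intermediate value theorem for conclusions (1) and (2), and a Lagrangian argument for conclusion (3). Your handling of (3) is in fact more careful than the paper's, which simply differentiates the Lagrangian and records the first-order stationarity condition without checking that it yields a global maximum; your per-stratum exchange inequality, together with your explicit flagging of the within-stratum ROC concavity needed to make $\lambda^*(g)$ the maximizer of $\phi_g$, is exactly what is missing there.
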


\begin{proof}
The first statement is obvious by the assumption that $h_g$ is continuous and has the image of $\mathbb R$. For the second statement, notice that the continuity of $h_g$ implies that for fixed $g$, (1) $\lambda(c,g)$ is continuous in $c$; (2) $\alpha(\eta,g)$ is continous in $\eta$. This shows that the left integral in the second statement, as a function of $c$ is continuous. By the second assumption, this integral converges to $\infty$ as $c\rightarrow 0$ and to 0 as $c\rightarrow \infty$. Thus, $c=c(\alpha_0)$ exists.

To show the final statement, consider the Lagrange multipler for \ref{eq:optim1}:

\[
\begin{split}
La(L)=&\int P[\hat f(X) \geq \lambda(g)|Y=1,G=g]\tilde p_{Y|G}(1|g)d\tilde F_G(g)\\
&-L(\int P[\hat f(X) \geq \lambda(g)|Y=0,G=g]\tilde p_{Y|G}(0|g) d\tilde F_G(g)-\alpha_0C)
\end{split}
\]

Differentiate w.r.t. $\lambda(g)$ for each $g$, it can be seen that the solution to the problem need satisfy:

\[
\begin{split}
&h_g(\eta)\cdot \tilde p_{Y|G}(1|g)/\tilde p_{Y|G}(0|g)|_{\eta=\lambda(g)}=c,~\forall g~a.s.~\tilde F_G\\
&\frac{\int P[\hat f(X) \geq \lambda(g)|Y=0,G=g] \tilde p_{Y|G}(0|g) d\tilde F_G(g)}{\int \tilde p_{Y|G}(0|g)d\tilde F_G(g)} = \alpha_0.
\end{split}
\]

The existence of such solution is proved by the first two results. This concludes the proof of Theorem \ref{th:condrank}.
\end{proof}

From the proof above, it is easy to see that the set $\{x:\hat f(x)>\lambda(c(\alpha_0),g(x))\}$ is nested as $\alpha_0$ increases from $-\infty$ to $\infty$. Using the argument similar to Theorem \ref{th:rank}, I obtain the following statement:

\begin{theorem}\label{th:condrank}
Define the ranking function
\begin{equation}\label{eq:rx}
r(x)=h_{g(x)}(\hat f(x))\cdot \tilde p_{Y|G}(1|g(x))/\tilde p_{Y|G}(0|g(x)).
\end{equation}
Assume that the conditions in Lemma \ref{lm:condrank} holds for any $\alpha_0\in(0,1)$. Using $r(x)$ to rank all data points in the testing set uniformly optimizes TPR at all FPR levels.
\end{theorem}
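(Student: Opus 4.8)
The plan is to lift the single-$\alpha_0$ result of Lemma~\ref{lm:condrank} to a uniform statement by exhibiting, for each threshold level, a cut set of the ranking function $r(x)$ that realizes the optimal region, and then checking that these cut sets form a nested family. First I would observe that, by the construction in Lemma~\ref{lm:condrank}, for a fixed FPR level $\alpha_0$ the optimal region is $R_{\alpha_0}=\{x:\hat f(x)\geq \lambda(c(\alpha_0),g(x))\}$. The defining equation $\tilde p_{Y|G}(1|g)h_g(\lambda(c,g))/\tilde p_{Y|G}(0|g)=c$ together with the monotonicity of $h_g$ (which follows from assumption~2, since $h_g$ is continuous with limits $+\infty$ and $0$, hence ``essentially decreasing'' along the relevant branch) shows that $\hat f(x)\geq \lambda(c,g(x))$ is equivalent to $h_{g(x)}(\hat f(x))\geq c\cdot\tilde p_{Y|G}(0|g(x))/\tilde p_{Y|G}(1|g(x))$ rearranged into the form $r(x)=h_{g(x)}(\hat f(x))\tilde p_{Y|G}(1|g(x))/\tilde p_{Y|G}(0|g(x))\geq c$. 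So $R_{\alpha_0}=\{x:r(x)\geq c(\alpha_0)\}$ up to a $\tilde F$-null set, which is exactly a top-$k$ set under the ranking $r$.

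Next I would argue the nesting. As $\alpha_0$ increases, the left-hand side of the balancing equation $\int \tilde p_{Y|G}(0|g)\alpha(\lambda(c,g),g)d\tilde F_G(g)$ must increase; since $\alpha(\eta,g)$ is non-increasing in $\eta$ and $\lambda(c,g)$ is monotone in $c$, this forces $c(\alpha_0)$ to be non-increasing in $\alpha_0$. A smaller cutoff $c$ yields a larger superlevel set $\{x:r(x)\geq c\}$, so $\{R_{\alpha_0}\}_{\alpha_0\in(0,1)}$ is a nested increasing family, all of whose members are superlevel sets of the single function $r$. This is precisely the situation handled in the proof of Theorem~\ref{th:rank}: a single ranking function whose superlevel sets simultaneously solve the constrained problem at every FPR level means that ranking by that function uniformly maximizes TPR. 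I would then invoke that argument verbatim to conclude.

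The main obstacle I anticipate is making the equivalence ``$\hat f(x)\geq\lambda(c,g(x)) \iff r(x)\geq c$'' fully rigorous, because $h_g$ is only assumed continuous with the stated limits, not strictly monotone, so $\lambda(c,g)$ need not be uniquely defined and the level sets of $h_g$ could be fat. Handling this carefully requires either strengthening assumption~2 to strict monotonicity of $h_g$ (which is the natural regularity condition and likely what is intended), or working with a canonical choice of $\lambda(c,g)$ (e.g.\ the infimum of the solution set) and verifying that the resulting $\tilde F$-measure of the symmetric difference between $\{x:r(x)\geq c\}$ and the optimal region is zero — the first assumption, $p_{Y|X}(1|x)\in(0,1)$ a.s., should help rule out degenerate mass exactly on a level boundary. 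A secondary, more routine point is confirming measurability of $r$ and of the cut sets so that the integrals defining TPR and FPR are well defined; I would dispatch this quickly by noting $r$ is a composition of measurable maps. Once the equivalence and nesting are in hand, the rest is a direct appeal to the already-established reasoning behind Theorem~\ref{th:rank}.
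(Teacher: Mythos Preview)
Your proposal is correct and follows essentially the same route as the paper: extract from Lemma~\ref{lm:condrank} the optimal region for each fixed $\alpha_0$, observe that these regions are nested as $\alpha_0$ varies, and then invoke the argument of Theorem~\ref{th:rank}. You are in fact more careful than the paper, which leaves the identification $R_{\alpha_0}=\{x:r(x)\geq c(\alpha_0)\}$ implicit and does not address the monotonicity of $h_g$ that you correctly flag as the delicate point.
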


Theorem \ref{th:condrank} has its value in practice. For fixed $g$, the set of point pairs $\{(\alpha(\eta,g),\beta(\eta,g)):-\infty<\eta<\infty\}$ forms the ROC curve conditional on $G=g$, and $h_g$ is the slope on this curve. Thus, the result of this theorem basically states that the optimal ranking should be based on the slope of such conditional ROC curves, adjusting for the conditional odds ratio $\tilde p_{Y|G}(1|g)/\tilde p_{Y|G}(0|g)$. For this reason, in the subsequent sections I will refer to $h_g$ by the \textit{slope function}.

Noticeably, the slope of ROC curve has drawn earlier attentions by \cite{choi1998slopes} and \cite{johnson2004advantages}, which show that it represents the likelihood ratio at a single datum, and advocate its usage for classification models. Following the same line, the ranking quantity $r(X)$ can be interpreted as the conditional likelihood ratio evaluated at the sampling distribution of the testing set. To see this, consider the slope of the conditional ROC curve at point $(x,y)$. It can be seen that the slope of the conditional ROC curve at data point $(x,y)$ is:

\[
h_g(\hat f(x))=\frac{P(\hat f(X)=\hat f(x)|Y=1,G=g)}{P(\hat f(X)=\hat f(x)|Y=0,G=g)}.
\]

Therefore, the ranking function can be rewritten as:

\[
r(x)=\frac{P(\hat f(X)=\hat f(x)|Y=1,G=g)}{P(\hat f(X)=\hat f(x)|Y=0,G=g)}\cdot\frac{\tilde p_{Y|G}(1|g)}{\tilde p_{Y|G}(0|g)}=\frac{P_{te}(\hat f(X)=\hat f(x),Y=1|G=g)}{P_{te}(\hat f(X)=\hat f(x),Y=0|G=g)},
\]

where $P_{te}$ means that the probability calculation is based on the testing data distribution. The right hand side is indeed the likelihood ratio for the pair of random variable $(\hat f(X),Y)$ conditional on $G$. This shows that ranking data by $r(x)$ is actually ranking based on the conditional joint distribution of $(\hat f(X), Y)$ among the testing sample. By calculating the probability conditional on $G$, $r(x)$ incorporates strata information to adjust for the sampling difference in the testing data. In addition, by looking at the likelihood ratio, $r(x)$ has the same scale regardless of the scale of the raw classification score $\hat f$. In practice, this means that different classification models can be applied on different sampling strata and then integrated easily by using $r(x)$.

\subsubsection{Application of Theorem \ref{th:condrank}}\label{sec:application}

In this section I discuss three potential applications for Theorem \ref{th:condrank}.

\textbf{Stratified Sampling within Training Data}

Assume that the full training data have the same distribution as testing data, but classification models are trained on stratified sampled training data. In this case, $\tilde p_{Y|G}(1|g)$ can be estimated from the full training data, and $h_g(\hat f(x))$ can be estimated from the conditional ROC curve. The details in ranking the testing data are described by Algorithm \ref{alg:rank}.

\begin{algorithm}
\caption{Ranking for Stratified Sampled Training Data}
\label{alg:rank}
\begin{algorithmic}[1]
\REQUIRE A labeled training data set with $J$ groups, denoted by $\cup_{j=1}^J \mathcal D_j$, where $\mathcal D_j=\{(x_{i,j},y_{i,j}):1\leq i\leq n_i\}$, and an unlabeled testing data set with $J$ groups, denoted by $\cup_{j=1}^J \mathcal D'_j$, where $\mathcal D'_j=\{x'_{i,j}: 1\leq i\leq n'_i,1\leq j\leq J\}$.
\FOR {j=1 to J}
\STATE Fit a classification model using data $\{(x_{i,j},y_{i,j}):1\leq i\leq n_i\}$, with sampling or weighting conditional on $y_{i,j}$. Denote the classification score function by $\hat f_j$.
\STATE Compute the TPR and FPR at each point in $\mathcal D_j$. Estimate the interpolated function $\hat \alpha(s)$ at an arbitrary score value $s$.
\STATE Estimate a function $\hat h_j$ for the slope of the ROC curve at each FPR level $\alpha$ using smoothing methods. A review of such methods are included in \cite{gonccalves2014roc}.
\STATE Estimate the conditional odds at this strata by $\hat o_j=(\sum_{i=1}^{n_i}y_{ij})/[\sum_{i=1}^{n_i}(1-y_{ij})]$.
\STATE Compute the ranking metric for each sample point in $D'_j$ by $\hat r(x'_{i,j})=\hat o_j\cdot \hat h_j(\hat\alpha(\hat f(x'_{i,j})))$.
\ENDFOR
\STATE Rank data in the testing data according to the value $\hat r(x'_{i,j})$.
\end{algorithmic}
\end{algorithm}

\textbf{Semi-supervised Learning}\label{sec:semi}

When the distributions between training and testing data are diffrerent, applying a mis-specified model on the testing data can be sub-optimal. One remedy suggested by Theorem \ref{th:condrank} is to find a strata variable $G$ such that (1) distribution in $G$ is different between training and testing data; (2) at different strata $P(Y=1|G=g)$ are different. While finding a strata variable that satisfy both conditions can be difficult, the first condition is relatively easy to obtain via unsupervised learning. One idea is to pool the training and testing data set and apply the Principle Component Analysis, and use the first principle component as the strata variable. Exploraty analysis is recommended to see whether training data and testing data are distributed differently in the direction of the first principle component, and whether the conditional probability $P(Y=1|PC1)$ has sufficient variability. If such initial exploration suggests that the first component is indeed a good candidate strata variable, Algorithm \ref{alg:pca} can be applied to stratify the data, before Algorithm \ref{alg:rank} is applied to calculate the actual ranking scores.

\begin{algorithm}
\caption{Determining the Strata Variable}
\label{alg:pca}
\begin{algorithmic}[1]
\REQUIRE A labeled training data set $\{(x_{i},y_{i}):1\leq i\leq n\}$, and an unlabeled testing data set $\{x'_{i}:1\leq i\leq n'\}$.
\STATE Train a classification score function $\hat f$ based on the training data.
\STATE Perform Principle Component Analysis based on the pooled data set $\{x_{i}\}\cup\{x'_{i}\}$. Calculate the first principle component score $s(x)$ for each sample point.
\STATE Determine the number of strata $J$ as well as sequence of thresholds $-\infty < c_1 <\cdots<c_J<c_{J+1}=\infty$. Let $\mathcal D_j=\{(x_{i},y_{i}):c_j\leq s(x_i)<c_{j+1}\}$ and $\mathcal D'_j=\{x'_{i}:c_j\leq s(x'_i)<c_{j+1}\}$.
\end{algorithmic}
\end{algorithm}

\textbf{Model Ensembling}\label{eq:ensemble}

Consider the extremely scenario where $g(x)=x$ in Theorem \ref{th:condrank}. In this case, Eqn. \ref{eq:rx} becomes:

\[
r(x)=h_x(\hat f(x))\tilde p_{Y|X}(1|x)/\tilde p_{Y|X}(0|x).
\]

This equation suggests one way to ensemble two models: one model that produces the classification score function $\hat f$ used to estimate the slope of the ROC curve, the other that produces the probability estimate for $\tilde p_{Y|X}$ used for calibration. In practice, this suggests a new way to ensemble models: the formula can be applied iteratively to ensemble a large number of models. Computationally, estimating $h_x$ requires nonparametric smoothing, making this formula hard to compete with common ensembling techniques such as averaging or polling. Nevertheless, the serendipity finding for the relationship between ensembling and calibration is interesting, and this may lead to new model ensembling techniques.

\subsubsection{Alternative Methods}

When training and testing data are stratified sampled, and the distribution across strata are different, two alternative methods are frequently applied in practice. The first method is to calibrate the classification score within individual strata to obtain probability estimates, and use such estimates to rank data points across strata. Theoretically, this method performs best when the calibrated probability estimates is the "true" model $p_{Y|X}$; practically, because the model is always mis-specified, performance of probability estimates can be sub-optimal. The second method is to embed the sampling bias factor as weights in the loss function on the training data. As I mentioned in Section \ref{sec:introduction}, such a method is not always applicable since certain classification models do not permit such weighting schema. Comparison between these methods and my proposed method is further investigated in Section \ref{sec:examples}.

\section{Examples}\label{sec:examples}

In this section, I use the credit card data set from \cite{dal2015calibrating} to illustrate the application of the theory in Section \ref{sec:rerank}. This data set contains transactions made by credit cards in September 2013 by European cardholders. The 284,807 transactions presented in this data set occurred in two days, among which 492 were frauds. The data set includes 28 transformed features as predictors, one binary fraud indicator, one variable for the fraudulent amount, and a timestamp. In my experiments, I disregard the fraudulent amount, and only consider classifying fraud v.s. non-fraud transactions using the 28 predictors. Based on the timestamp variable, I divide the data into a training set, a tuning set and a testing set, each representing 49\%, 21\% and 30\% of the whole time period covered in the data set. The area under the ROC curve for the testing set is the evaluation criteria for all different methods.

\subsection{Calibration with Stratified Sampling}

\begin{table}[htbp]
\centering
\caption{Average AUROC across different models and unbalancing techniques.}\label{tbl:stratasample}
\begin{tabular}{ccccccc}
\hline
& \multicolumn{2}{c}{Weighting} & \multicolumn{2}{c}{Under-sampling} & \multicolumn{2}{c}{SMOTE}\\
Calibration?& Yes & No & Yes & No & Yes & No\\\hline
xgboost & 0.898 & 0.863 & 0.871 & 0.845 & 0.904 & 0.862\\
SVM & 0.925 & 0.894 & 0.925 & 0.83 & 0.933 & 0.887\\
glmnet & 0.892 & 0.804 & 0.869 & 0.772 & 0.884 & 0.789\\\hline
\end{tabular}
\end{table}

\begin{figure}
\centering
\begin{tabular}{ccc}
\includegraphics[width=0.33\textwidth]{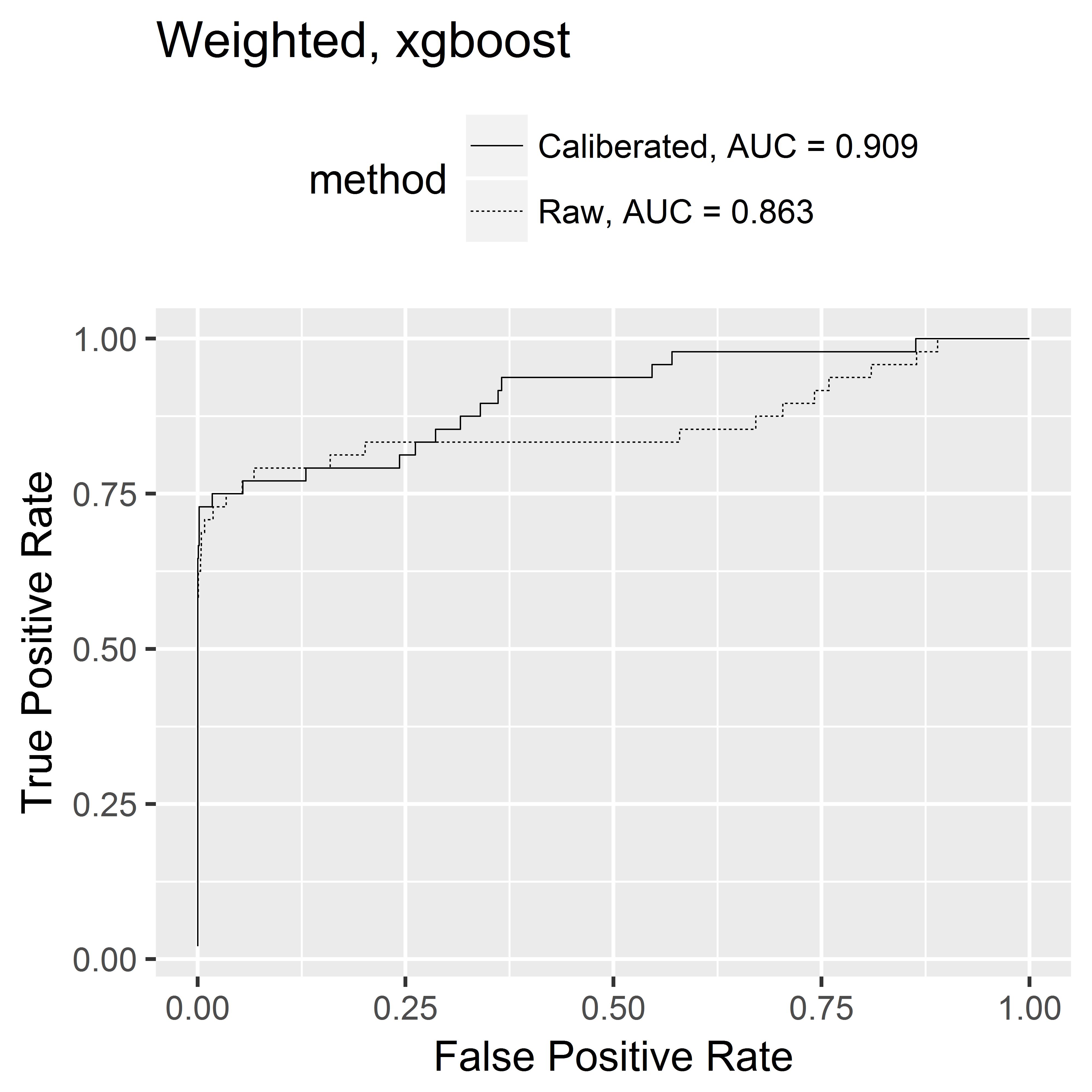} & 
\includegraphics[width=0.33\textwidth]{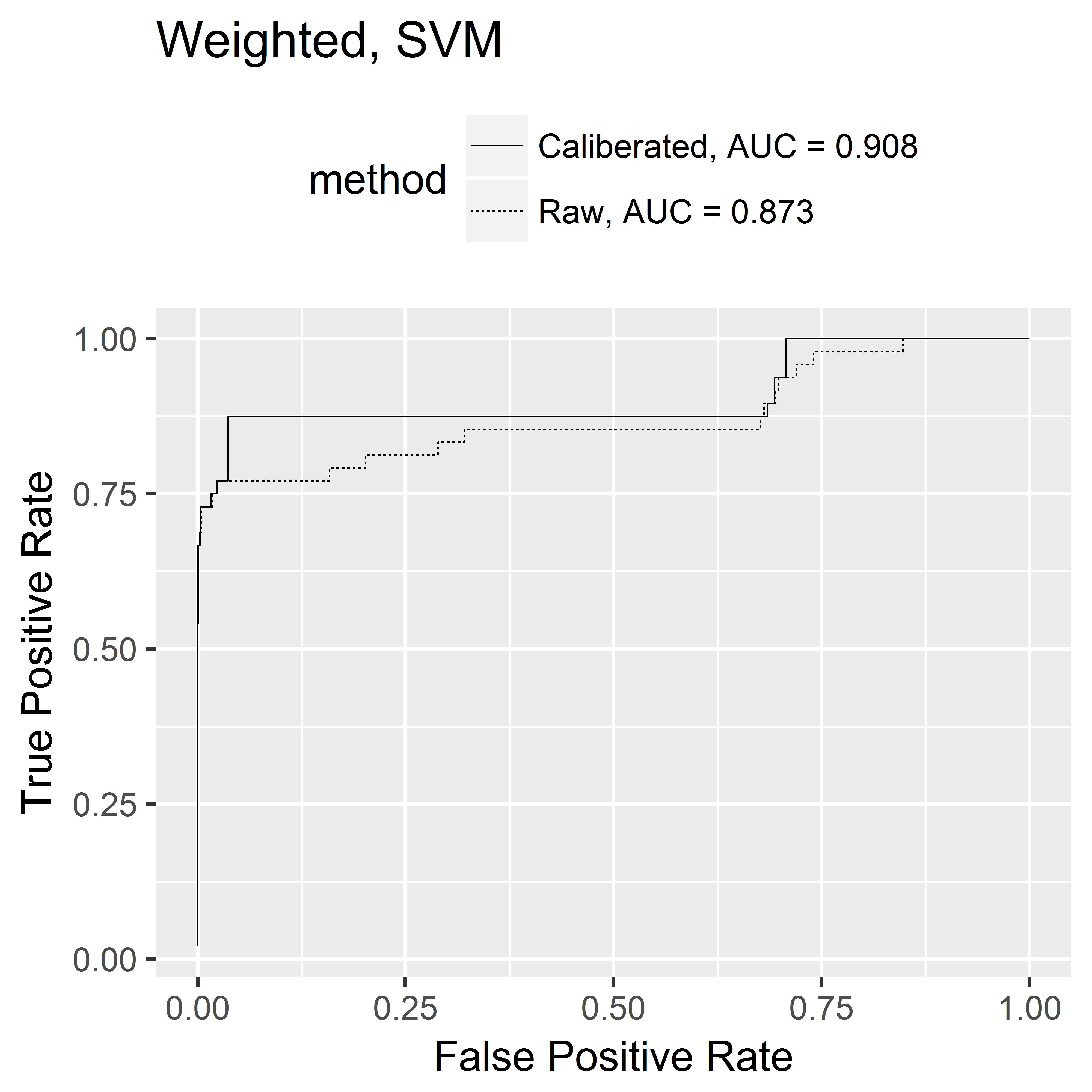} &
\includegraphics[width=0.33\textwidth]{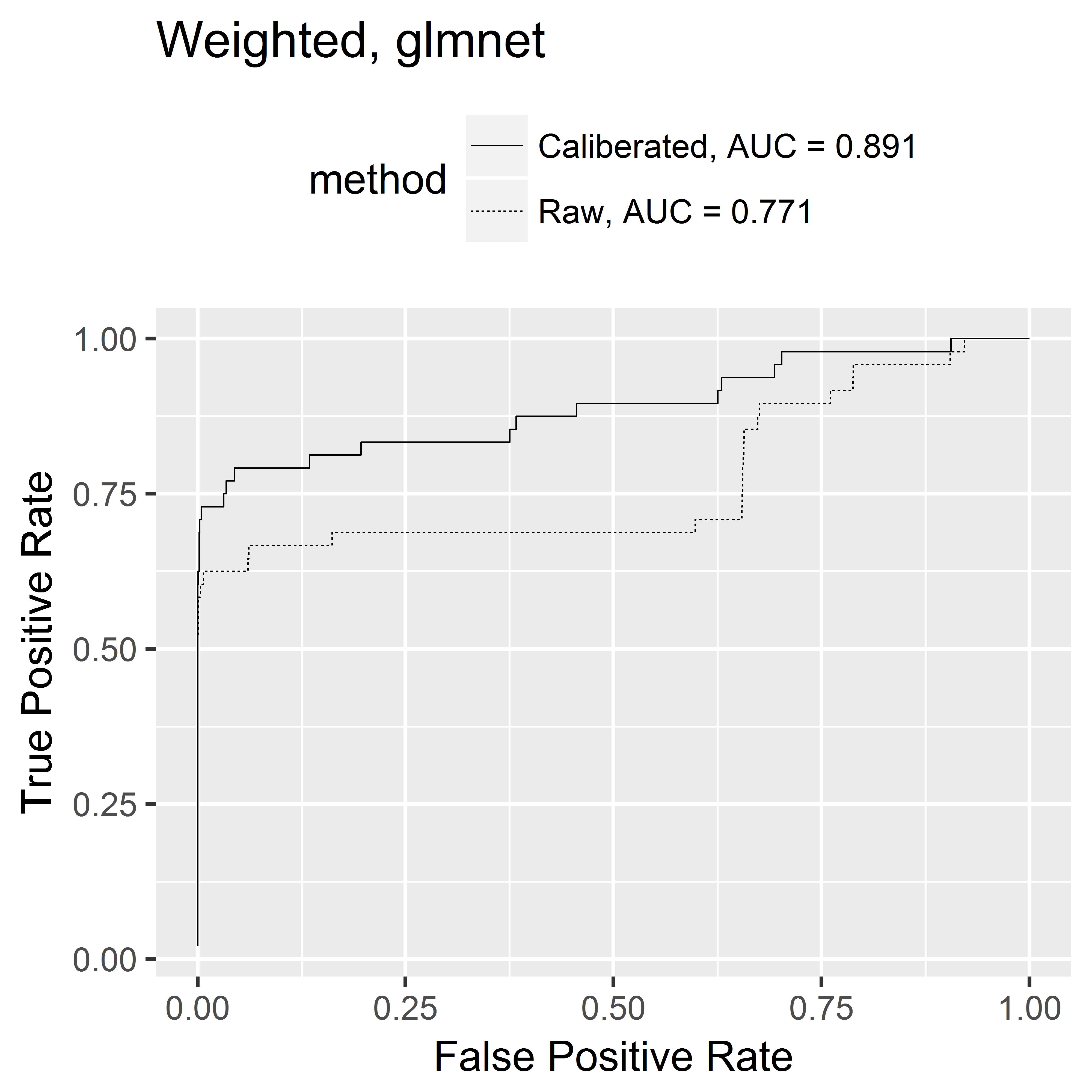} \\
\includegraphics[width=0.33\textwidth]{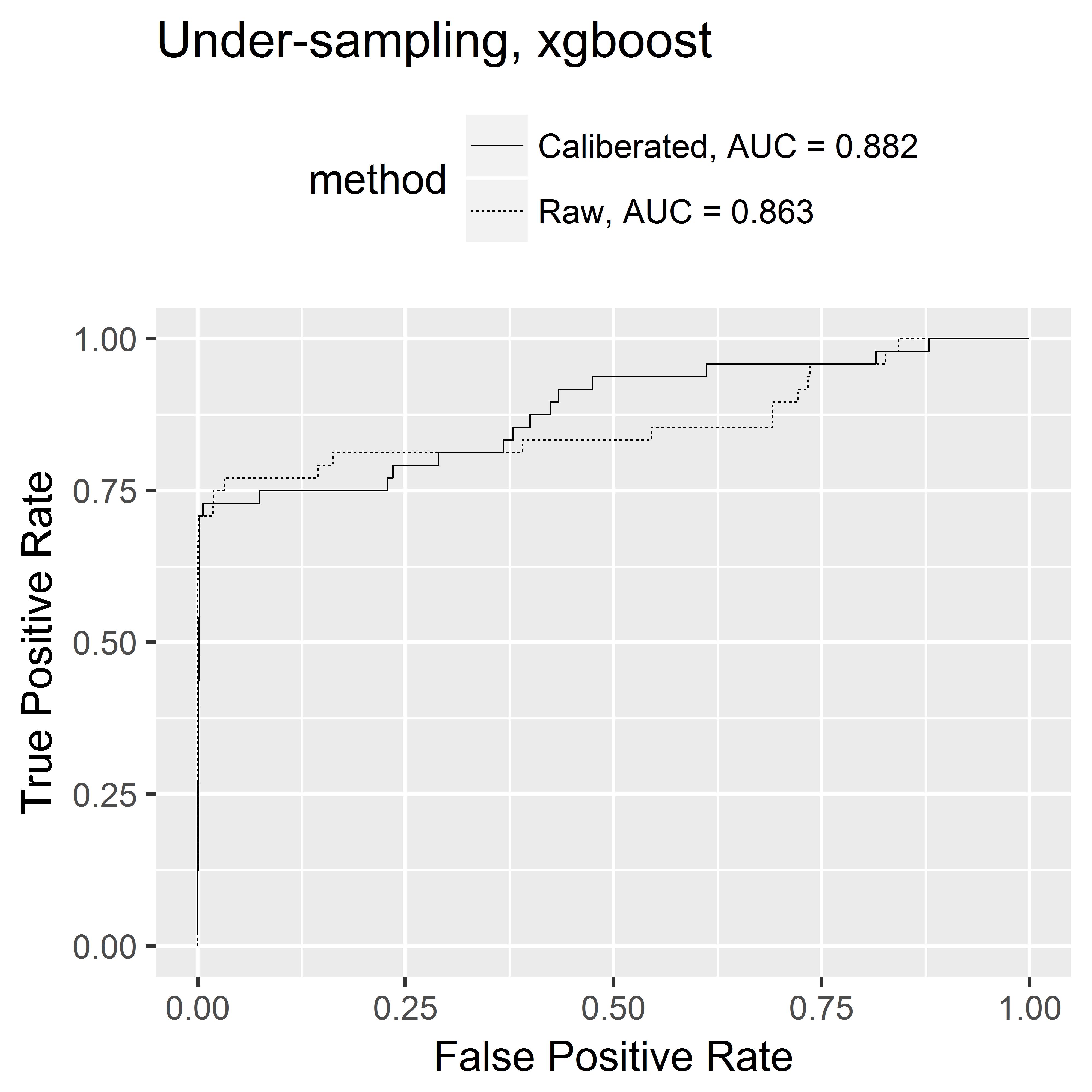} & 
\includegraphics[width=0.33\textwidth]{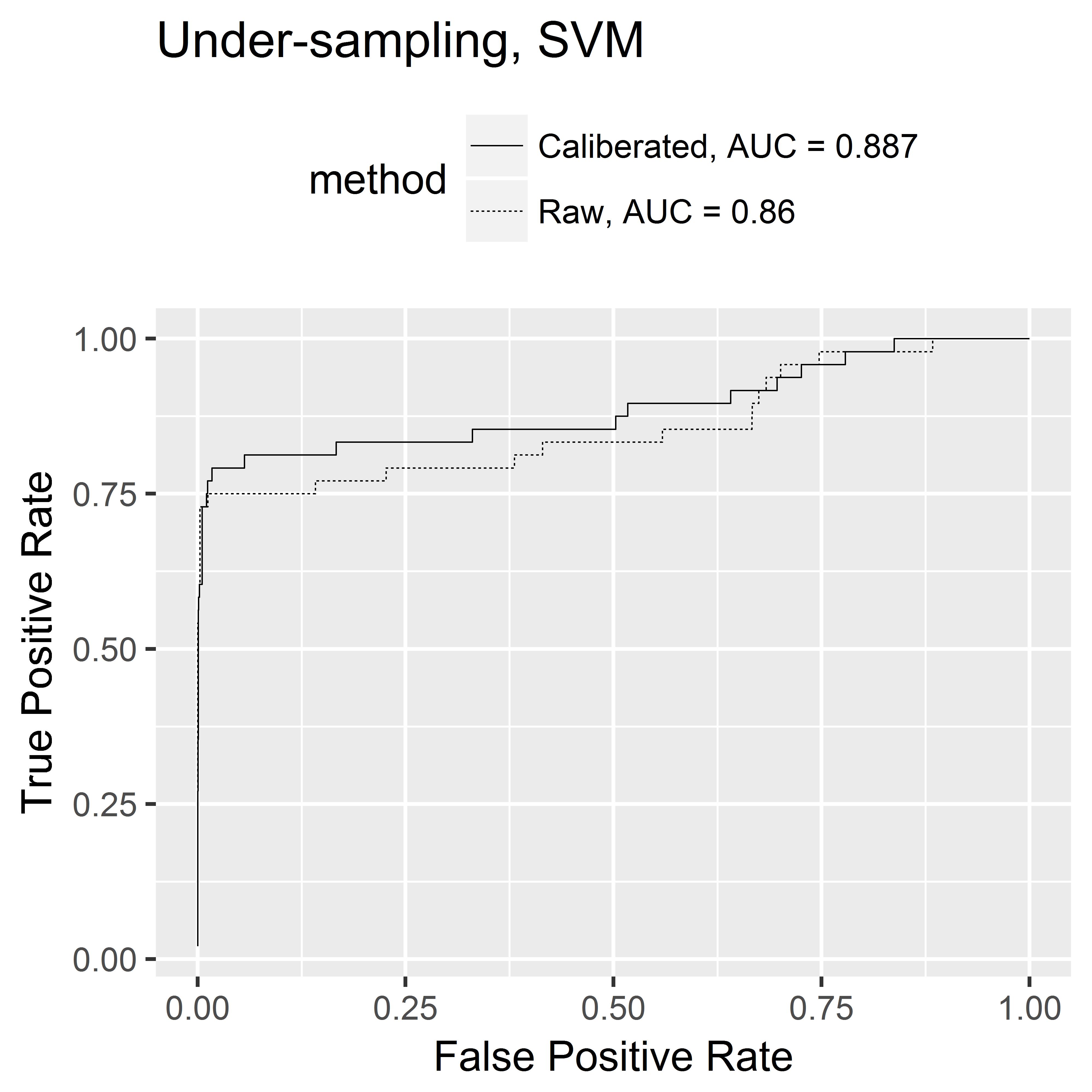} &
\includegraphics[width=0.33\textwidth]{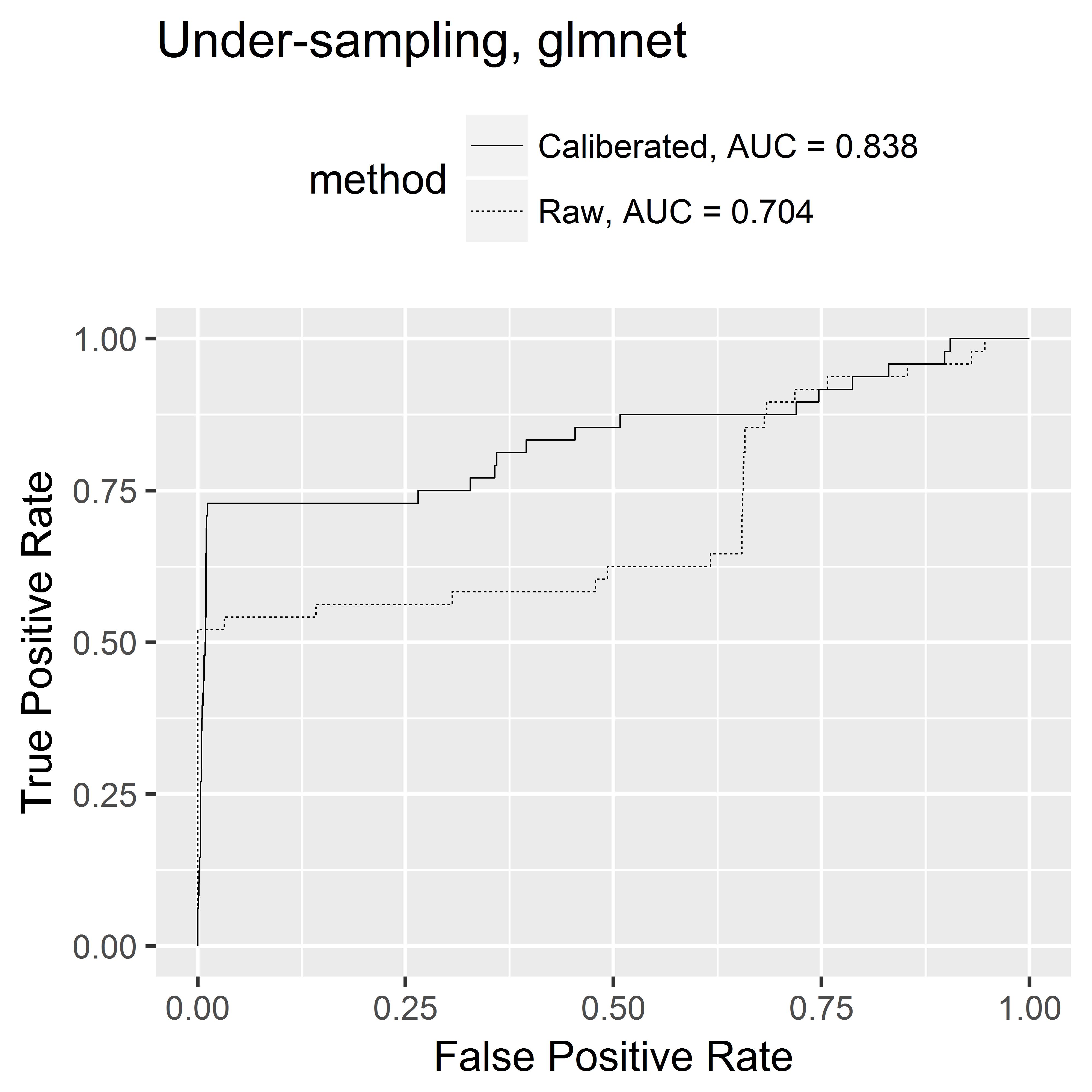} \\
\includegraphics[width=0.33\textwidth]{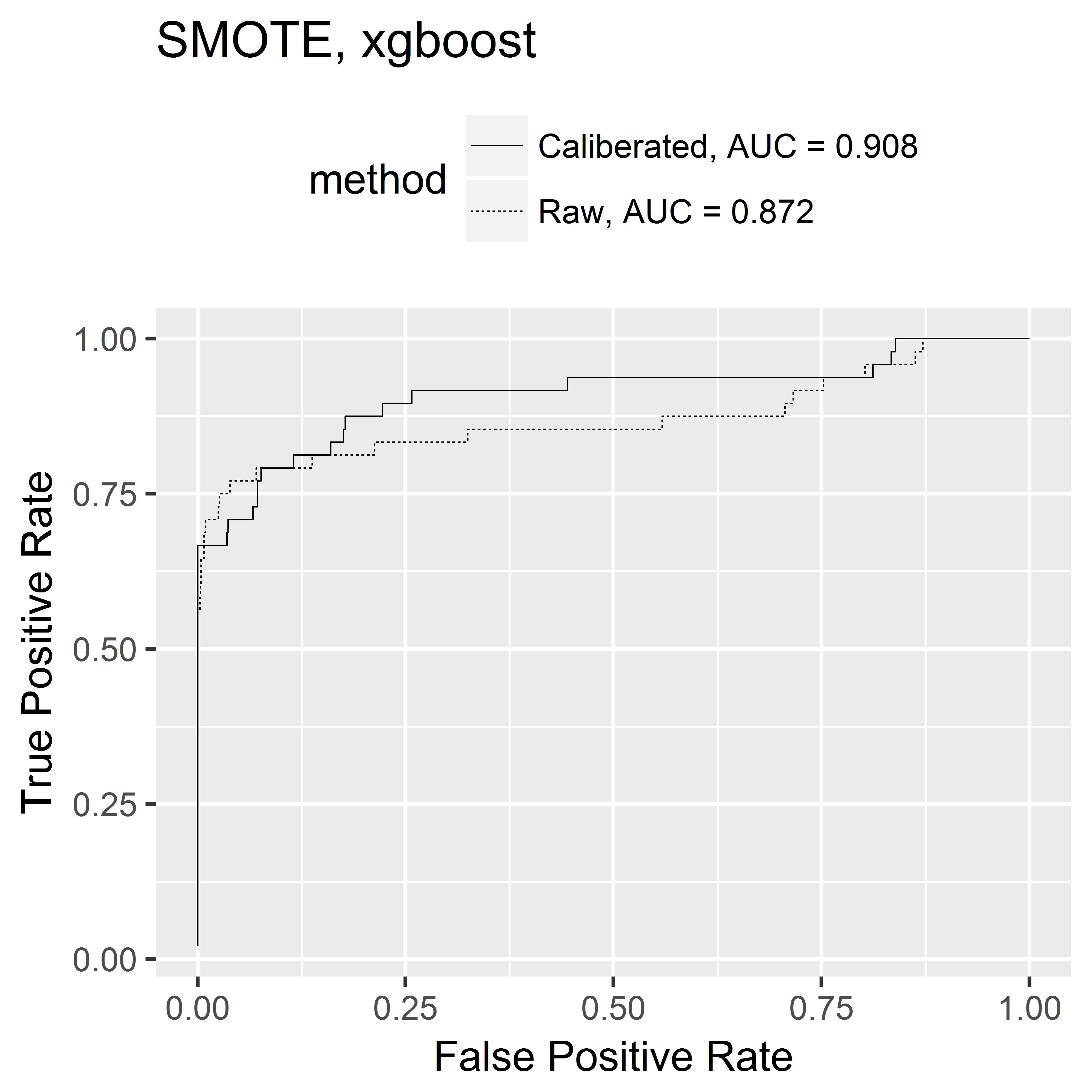} & 
\includegraphics[width=0.33\textwidth]{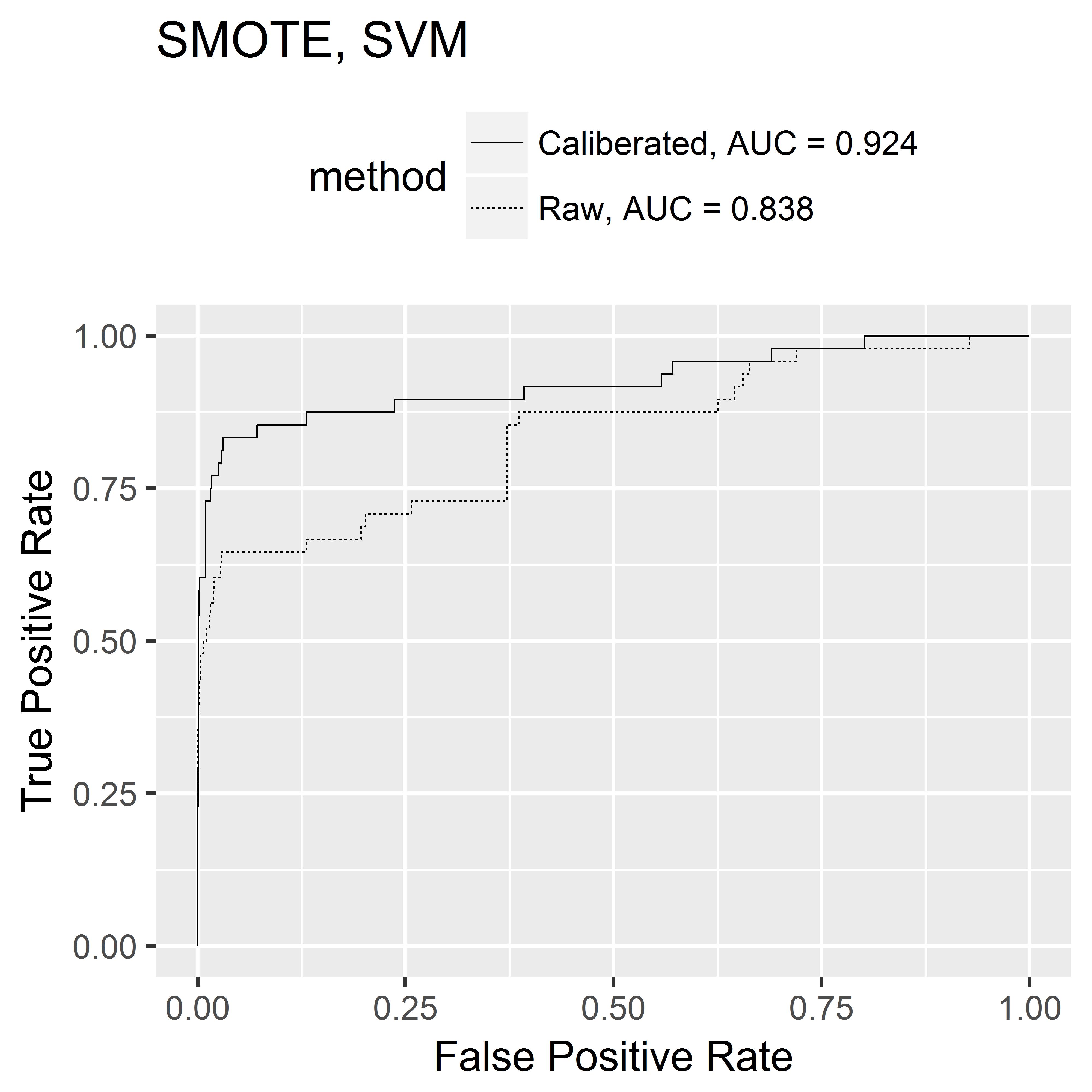} &
\includegraphics[width=0.33\textwidth]{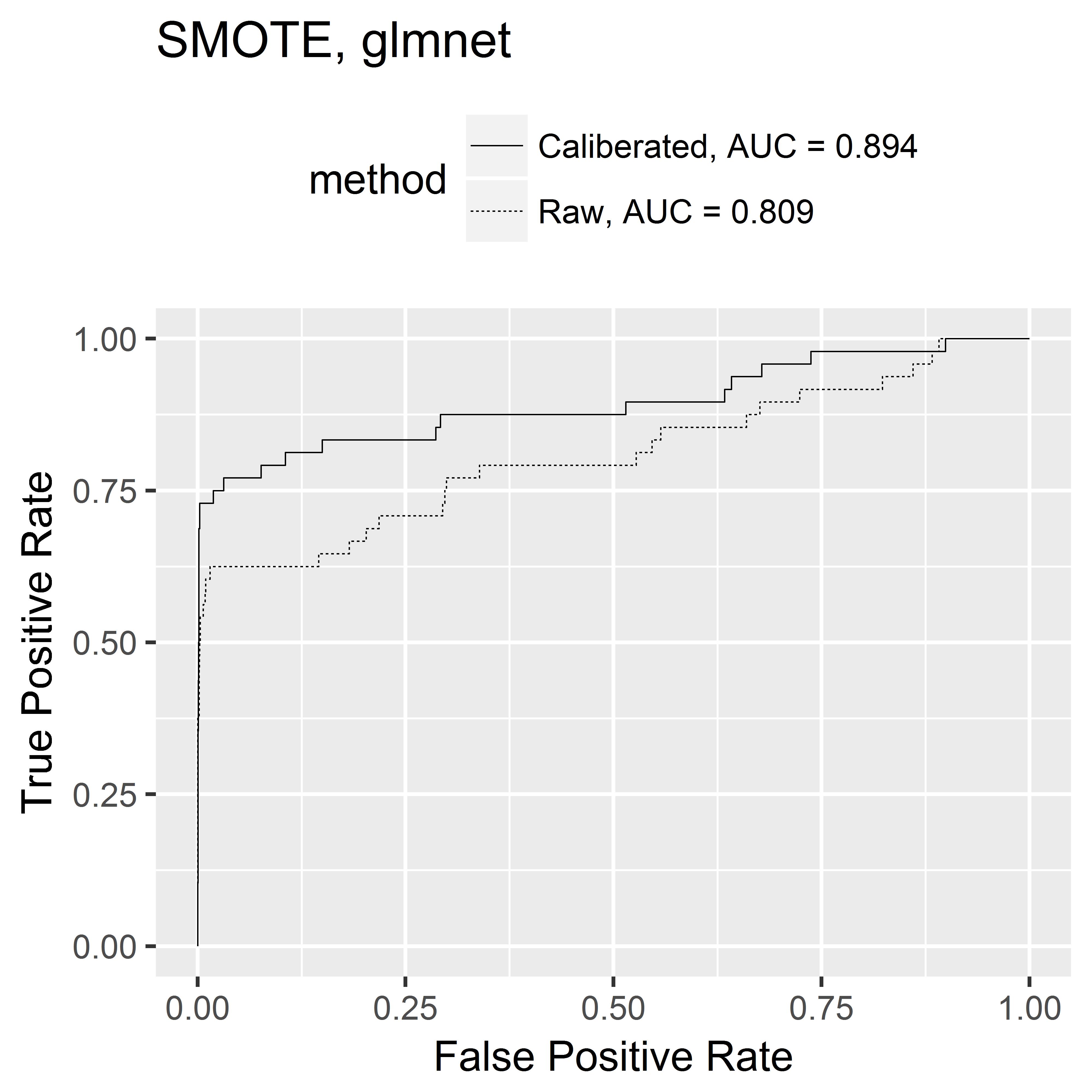} \\
\end{tabular}
\caption{Comparing ROC curves between calibrated scores and raw scores for the testing data.}\label{fig:stratasampling}
\end{figure}

In this experiment, I constructed a stratified sample using both training and tuning data set, used such a sample for model training, and evaluated the performance on the testing data. Specifically, the full data set was stratified into two segments based on whether $V12>0$. Within each segment, an independent model was developed using some data unbalancing technique. I evaluated three data unbalancing methods: SMOTE, observation weighting and under-sampling, as well as three popular classification models implemented in R: gradient boosting (xgboost, \cite{chen2016xgboost}), SVM (\cite{dimitriadou2011e1071}) and logistic regression with elastic net penalty (glmnet, \cite{friedman2009glmnet}). For each model, I first used the tuning data set to select the tuning parameters via random search. Using the selected tuning parameters, I then re-fitted the model based on the training data and evaluate the ROC curve on the tuning data. Next, the tuning data ROC curve was smoothed by local polynomial regression (\cite{cleveland1992local}) with smoothness parameter 0.05, which gave the estimation for the slope function. The final model was fitted based on the combination of training and tuning set, and applied to the testing data set for scoring. These raw testing data classification scores, together with the estimated slope function, were applied to Eqn. \ref{eq:rx} to obtain the calibrated ranking values.

Figure \ref{fig:stratasampling} shows a typical outcome in my experiment, comparing the ROC curve between ranking by my proposed calibrated score and the raw classification scores on the testing data set. Across all classification models, and across all unbalancing techniques, the calibrated score ranking shows significant improvement over uncalibrated results. The average AUC of 50 repetitions of this experiment is shown in Table \ref{tbl:stratasample}, demonstrating that the performance improvement is consistent across all classification models and all data unbalancing techniques. This experiment demonstrates that when the classification model is trained based on stratified sample, calibration raw model scores is necessary, and the proposed method in this paper effective remedies the stratified sampling bias issue.

\subsection{Calibration for Unobserved Distribution Shifts}

\begin{figure}
\centering
\begin{tabular}{ccc}
\includegraphics[width=0.33\textwidth]{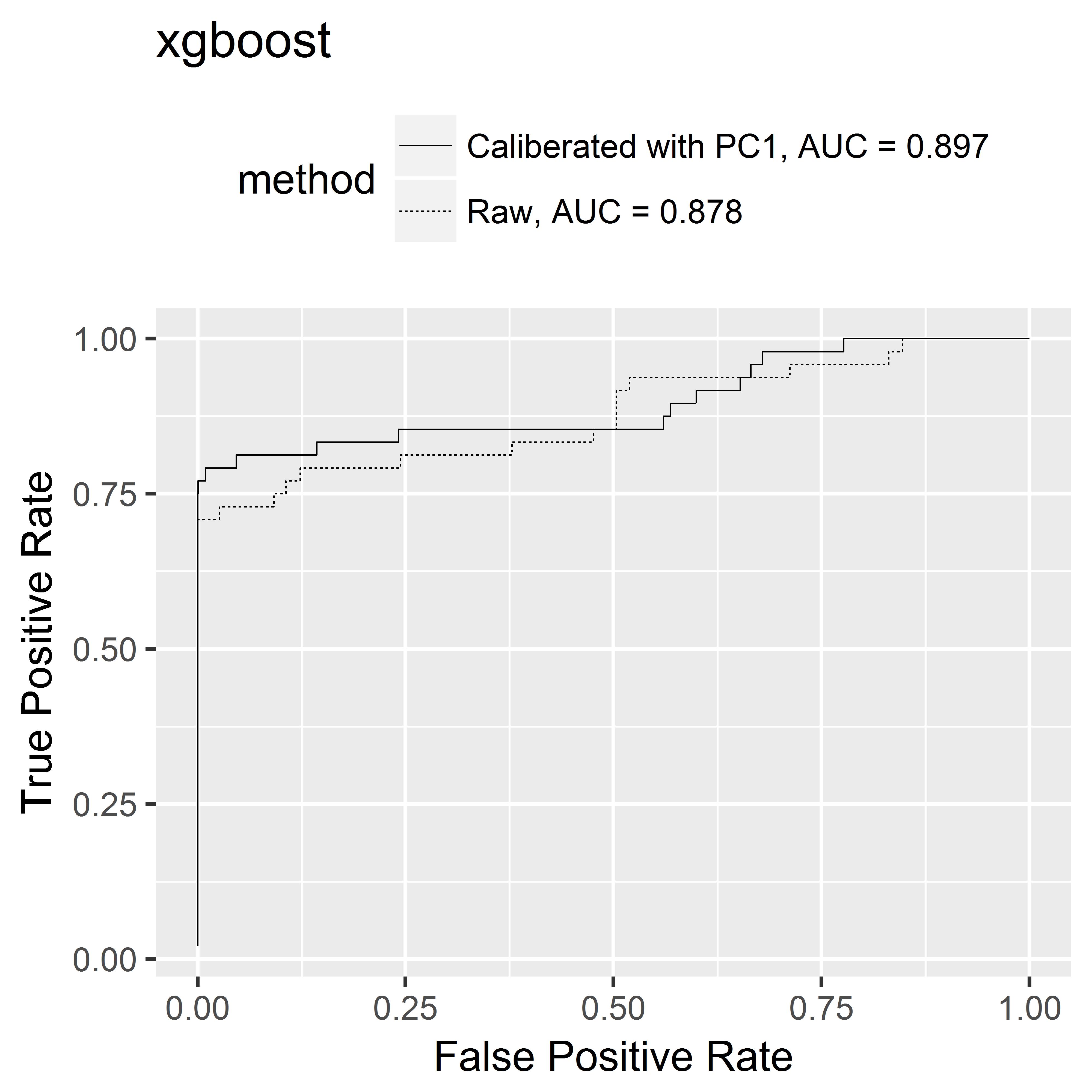} & 
\includegraphics[width=0.33\textwidth]{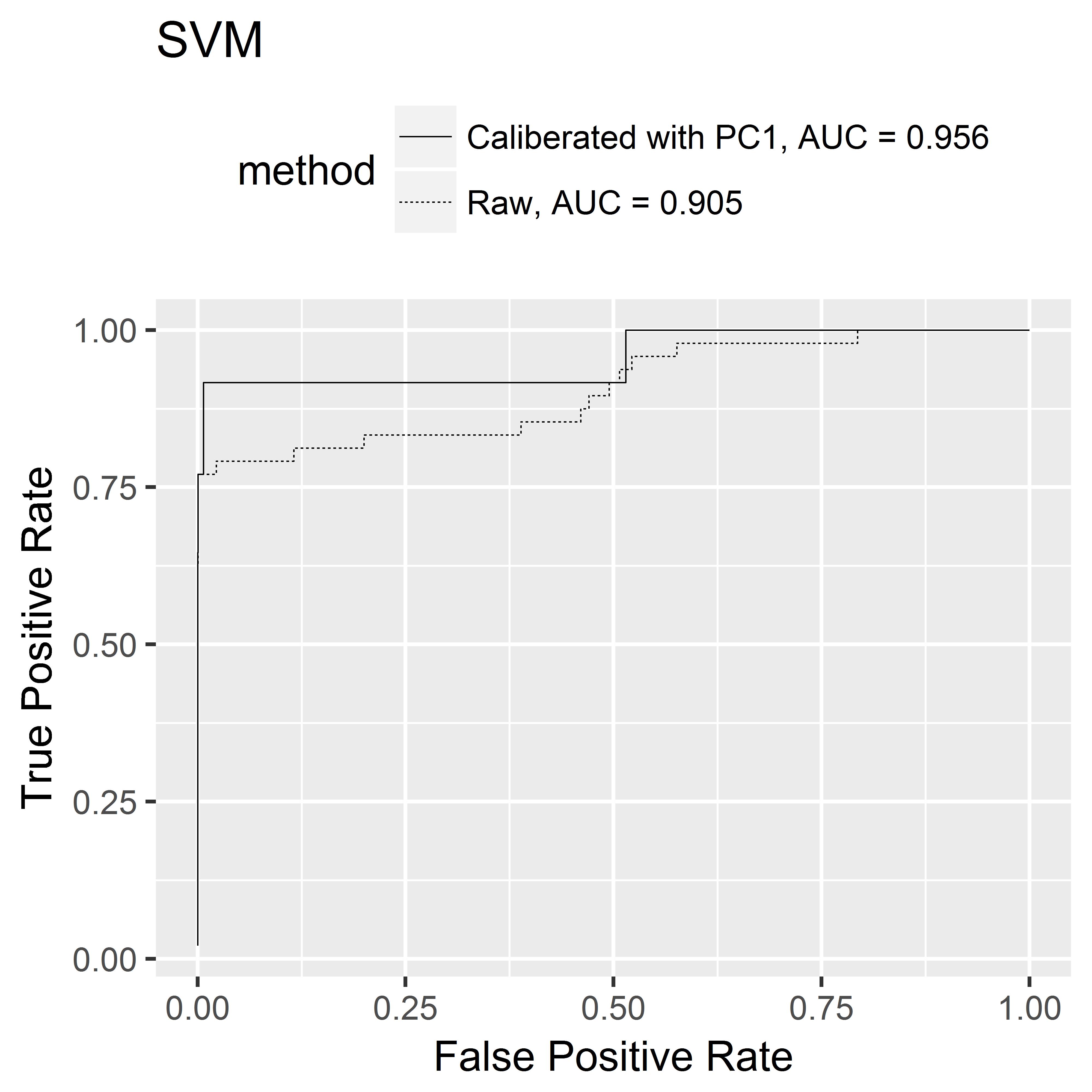} &
\includegraphics[width=0.33\textwidth]{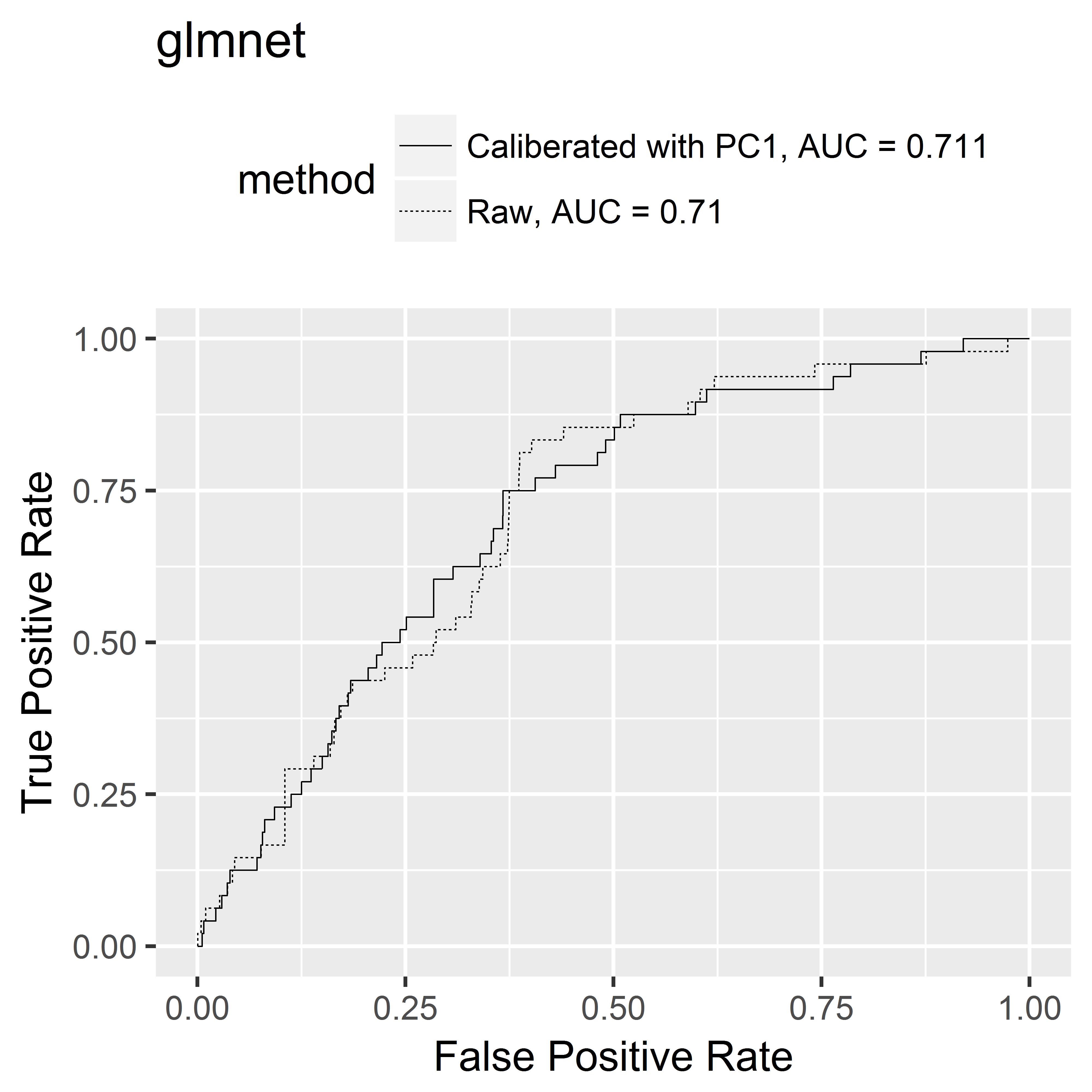} \\
\end{tabular}
\caption{ROC curves to compare calibrated models based on stratified data using PC1 and models using the raw data.}\label{fig:exp2}
\end{figure}

The second experiment was designed to evaluate whether constructing strata variables could improve classification performance, as discussed in Section \ref{sec:semi}. In this experiment, I first fitted a PCA model for all predictors across the full data, and segmented the full data into two groups based on the sign of the first PC. Then, a classification model was fitted on each of the two groups, without data unbalancing techniques. As in the first experiment, tuning data were used for both tuning parameter selection and estimating the slope function. The ranking scores for the two groups in the testing data set were calculated independently before being combined for overall ranking. As the benchmark, I also fitted a classification model using all the training and tuning data without segmentation. In this experiment, the above process is repeated for three classification models: xgboost, SVM and glmnet. Figure \ref{fig:exp2} compares the performance between models with PC1 calibration and the benchmark. Across all the three models, using PC1 to stratify the data for model building improved the ROC compared to the benchmark, and the improvement was the most significant for SVM models. For proof of concept, I do not attempt to further optimize certain details in this approach, including finding the best strata variable and determining the optimal number of stratas. Nevertheless, the results in this experiment show that stratified model building with proper calibration has the potential to improve many classification models in practice.

\section{Conclusions}\label{sec:conclusions}

In this paper, I investigate the problem of calibrating classification scores with sampling bias, and give the optimal calibration solution for the specific case when sampling bias is stratified. In certain business settings, the challenge in building classification models is the sheer volume of sample size, and building models based on stratified samples is not uncommon practice. Lacking theoretical guidance, the importance of cross-strata calibration is not understood, and in practice it is often neglected or performed sub-optimally. This paper, to my knowledge, is the first one that investigates this problem and proposes a theoretically justified solution.

Through developing theoretical results in this paper, I also find several interesting relationships between topics that have rarely been linked together in the machine learning literature. First, discriminant models and generative models are usually considered as different domains. ROC curve is generally applied for discriminant models, but my results show that optimizing ROC ideally leads to solving for the conditional joint distribution of $\hat f(X),Y$. This is an interesting example where the two domains cross with each other. Second, for real data where sampling strata are not defined a-priori, my results suggest that unsupervised learning might help to detect potential sampling shifts and mitigate their impact on classification. This generates an idea that is connected to semi-supervised modeling. Third, by considering extreme stratified settings the proposed method also implies new ways of ensembling models. Such relationships have not been explored in literature, yet they may yield insights into these related research areas that may warrant future research.

\bibliographystyle{plain}
\bibliography{classification_calibration}

\appendix

\end{document}